\newtheorem{secthm}{Theorem}[section]
\newtheorem{secex}[secthm]{Example}
\newtheorem{secdefn}[secthm]{Definition}
\newtheorem{secrem}[secthm]{Remark}
\newtheorem{secasm}[secthm]{Assumption}
\newcommand{\bE} { {\mathbb E}}
\newcommand{\bP} { {\mathbb P}}
\newcommand{\bR} { {\mathbb R}}
\newcommand{\bZ} { {\mathbb Z}}
\newcommand{\bS} { {\mathbb S}}
\newcommand{\cF} { {\mathcal F}}
\newcommand{\cS} { {\mathcal S}}
\newcommand{\cM} { {\mathcal M}}
\newcommand{\cC} { {\mathcal C}}
\newcommand{\lap} { {\operatorname{Lap}}}
\newcommand{\yk}[1]{{\color{red}#1}}
\def\red{\hfill $\lhd$}
\title{Initial State Privacy of Nonlinear Systems on Riemannian Manifolds}
\author[1]{Le Liu}
\author[2]{Yu Kawano}
\author[3]{Antai Xie}
\author[1]{Ming Cao}
\affil[1]{Faculty of Science and Engineering, University of Groningen, 9747 AG Groningen, The Netherlands}
\affil[2]{Graduate School of Advance Science and Engineering, Hiroshima University, Higashi-hiroshima 739-8527, Japan}
\affil[3]{School of Mechatronic Engineering and Automation, Shanghai University, Shanghai}
\runningauthor{Le Liu et al.}
\begin{document}

\begin{frontmatter}
\maketitle

\begin{abstract}
In this paper, we investigate initial state privacy protection for discrete-time nonlinear closed systems. By capturing Riemannian geometric structures inherent in such privacy challenges, we refine the concept of differential privacy through the introduction of an initial state adjacency set based on Riemannian distances. A new differential privacy condition is formulated using incremental output boundedness, enabling the design of time-varying Laplacian noise to achieve specified privacy guarantees. The proposed framework extends beyond initial state protection to also cover system parameter privacy, which is demonstrated as a special application.
\keywords{Initial state privacy, differential privacy, discrete-time systems, nonlinear systems, Riemannian distances}
\end{abstract}
\end{frontmatter}

\section{Introduction}
In the modern digital era, privacy protection has emerged as a critical concern across various domains, ranging from personal data management \cite{schwartz2003property} to complex industrial systems \cite{sadeghi2015security}. As data collection technologies become increasingly pervasive, protecting privacy has become an even more important social challenge. Several privacy protection methods for static data, such as encryption \cite{kogiso2015cyber} and differential privacy \cite{mo2016privacy, kawano2021modular}, have been extended to dynamical systems. However, the majority of these methods focus on linear systems, with only a few exceptions addressing nonlinear systems, e.g., \cite{Yu2020,le2020differentially}. Furthermore, it is often assumed that data requiring protection resides in Euclidean space, even though some types of data may naturally exist on Riemannian manifolds \cite{barachant2011multiclass, arnaudon2013riemannian}. To fully utilize such geometric structures for privacy protection, there is a need for a theoretical foundation that enables privacy analysis within the framework of Riemannian geometry. To address this, in this paper, we explore a noise-adding strategy for privacy protection in a Riemannian geometric framework, focusing on initial state privacy of discrete-time nonlinear closed systems.




\subsubsection*{Literature Review}
In systems and control, privacy protection has been widely studied in the context of consensus~\cite{nozari2017differentially}, distributed optimization \cite{wang2023}, filtering \cite{le2013differentially}, tracking control \cite{Yu2020, kawano2021modular}. For privacy-preserving consensus, a decaying Laplacian noise mechanism has been employed to protect the initial state privacy \cite{nozari2017differentially}. To mitigate the degradation of consensus performance, a correlated noise-adding strategy has been explored in \cite{mo2016privacy}, which allows the preservation of the accurate consensus value while safeguarding initial state privacy. In privacy-preserving distributed optimization, a piece of private information is the local objective function of each agent. To ensure its differential privacy while still finding the optimal solution, gradient methods have been adapted accordingly \cite{wang2023}. Additionally, the alternating direction method of multipliers (ADMM) has been tailored using partially homomorphic cryptography~\cite{zhang2019}. In privacy-preserving filter design, the Kalman filter has been modified to guarantee certain privacy level for estimated state~\cite{le2013differentially}. For nonlinear systems, contraction theory has been applied to nonlinear observer design \cite{le2020differentially}. To guarantee certain privacy level of the estimated state, noise is added to the output signal. However, differential privacy analysis of the estimated state is missing because it involves analysis of nonlinear dynamics.
 For privacy preserving control, tracking controllers while preserving privacy of reference or/and initial states have been proposed in the framework of $H_\infty$-control~\cite{Yu2020,kawano2021modular} and stochastic quantized control~\cite{liu2024design}. Furthermore, initial state privacy have been studied by \cite{Yu2020, wang2023differential}. In particular, \cite{Yu2020} has dealt with nonlinear systems in terms of incremental output stability, but only in the Euclidean space. {\color{blue} On the other hand, differential privacy on Riemannian manifolds has primarily been studied in the context of static data rather than dynamical systems. For instance, Laplacian noise has been explored to ensure differential privacy on Riemannian manifolds in \cite{reimherr2021differential}. Similarly, \cite{jiang2023gaussian} has demonstrated Gaussian differential privacy on Riemannian manifolds using Gaussian noise. 
However, Riemannian manifolds remain underutilized in the privacy protection of dynamical systems.}

\subsubsection*{Contribution}
In this paper, we investigate how to protect the initial state from being inferred from a published output sequence in discrete-time nonlinear closed systems. Unlike the aforementioned literature, we employ a Riemannian distance to define the adjacency set, thereby capturing the inherent geometric structure of the initial state. Focusing on time varying Laplacian noise, we derive a sufficient condition for differential privacy in terms of incrementally output boundedness, introduced by this paper. Furthermore, we proceed with contraction analysis to provide a sufficient condition for incrementally output boundedness by utilizing a Riemannian metric. An advantage of dealing with nonlinear systems is that we can also handle protecting system parameters in addition to the initial state by treating system parameters as additional state variables. This aspect is illustrated by utilizing linear systems, noting that the problem becomes nonlinear even for such systems.


The main contributions of this paper are summarized as follows: 
\begin{enumerate} 
\item We introduce a novel definition of differential privacy utilizing the initial state adjacency set based on a Riemannian distance. 
\item We propose a sufficient condition for employing time-varying Laplacian noise to ensure a prescribed level of differential privacy, framed in terms of incremental output boundedness. 
\item We adapt the proposed method to protect system parameters in addition to the initial state. 
\end{enumerate}

\subsubsection*{Organization}
The remainder of this paper is organized as follows. In Section~\ref{sec:pre}, we formulate initial state privacy of discrete-time nonlinear systems by introducing the notion of differential privacy with respect to an adjacency set defined by a Riemannian distance. In Section~\ref{sec:main}, we provide a sufficient condition for time varying Laplacian noise to achieve a prescribed differential privacy level in terms of incremental output boundedness, which can be verified by contraction analysis. In Section~\ref{sec:app}, we apply the proposed results for protecting system parameters in addition to the initial state, which is further illustrated by numerical simulation in Section~\ref{sec:simu}.
Finally, Section~\ref{sec:con} concludes this paper.

\subsubsection*{Notation}
The sets of real numbers and integers are denoted by $\bR$ and $\bZ$, respectively. For $k_0 \in \bZ$, we define $\bZ_{k_0+}:=\bZ \cap\left[k_0, \infty\right)$ and $\bZ_{[k_0, k]}:=\bZ \cap\left[k_0, k\right]$. For some $k_0 \in \bZ$ and $k \in \bZ_{k_0+}$, subsequences of a $\mathcal{Y}$-valued sequence $(y_i)_{i \in Z}$: $\bZ \rightarrow \mathcal{Y}^{\bZ}$ are denoted by $y_{[k_0, k]} : \bZ_{[k_0, k]} \rightarrow \mathcal{Y}^{\bZ_{[k_0, k]}}$. The identity matrix of size $n \times n$ is denoted by $I_{n}$. The function $\max: \mathbb{R} \times \mathbb{R} \to \mathbb{R}$ returns the larger of two numbers. Specifically, $\max(x, y)$ is defined as $x$ if $x \geq y$, and $y$ otherwise. Given a constant $c \in \mathbb{R}$, we write $\mathbb{R}_{\geq c}$ to denote $[c, \infty) \subset \mathbb{R}$. A locally Lipschitz continuous function $\alpha: \mathbb{R}_{\geq 0} \rightarrow \mathbb{R}_{\geq 0}$ is said to belong to class $\mathcal{K}$ if it is strictly increasing and $\alpha(0)=0$. Given two functions $f: \mathcal{Z} \rightarrow \mathcal{Y}$ and $g: \mathcal{X} \rightarrow \mathcal{Z}$, the composition $f \circ g$ assigns to each $p \in \mathcal{X}$ the value $(f \circ g) (p)=f(g(p)) \in \mathcal{Y}$. The set of symmetric and positive (resp. semi) definite matrices is denoted by $\mathbb{S}_{\succ 0}^{n \times n}$ (resp. $\mathbb{S}_{\succeq 0}^{n \times n}$ ). For $P, Q \in \mathbb{R}^{n \times n}, P \succ Q$ (resp. $P \succeq Q$ ) means $P-Q \in \mathbb{S}_{\succ 0}^{n \times n}$ (resp. $P-Q \in \mathbb{S}_{\succeq 0}^{n \times n}$ ). The $\ell_1$ norm of a vector $x \in \mathbb{R}^n$ is denoted by $|{x}|$. The $\ell_2$ norm of a vector $x \in \mathbb{R}^n$ is denoted by $|{x}|_2$.

Let $\cM$ be a smooth and connected $n$ dimensional manifold equipped with a class $C^{1}$ Riemannian metric $\tilde{P} : \cM \rightarrow \mathbb{S}_{\succ 0}^{n \times n}$ and $\Gamma(x, x^{\prime})$ be the collection of class $C^{1}$ paths $\gamma:[0,1] \rightarrow \cM$ such that $\gamma(0) =x$ and $\gamma(1)= x^{\prime}$. Then, the distance function induced by a Riemannian metric $\tilde{P}(x)$ is
\begin{align}
\label{eq:dis_x}
    d_{\tilde{P}}(x, x^{\prime}) := \underset{\gamma \in \Gamma(x, x^{\prime})}{\inf}\int_{0}^{1}\sqrt{\frac{d^{\top}\gamma(s)}{ds}\tilde{P}(\gamma(s))\frac{d\gamma(s)}{ds}}ds.
\end{align}

A probability space is denoted by $(\Omega, \cF, \bP )$, where $\Omega$, $\cF$, and $\bP $ denote the sample space, the $\sigma$ algebra and the probability measure, respectively. The expectation of a random variable is denoted by $\bE[\cdot]$. $v \sim Lap(a,b)$ represents that $v$ follows an Laplacian noise with mean $a$ and diversity $b$. Specifically, its density function is given by
$$
f(v) = \frac{1}{2b}e^{-\frac{|v-a|}{b}}.
$$
 
\section{Preliminaries}\label{sec:pre}
Consider the following nonlinear autonomous discrete-time system on a smooth and connected manifold $\cM$ with a randomized output:
\begin{subequations}\label{eq:sys}
 \begin{align}
 \label{system_1}
	x_{k+1} &= f_k(x_k)\\
 \label{system_2}
    y_k &= h_k(x_k)+v_k,
 \end{align}
\end{subequations}
where $f_k:\cM \to \cM$ and $h_k:\cM \to \bR^{m}$ are of class $C^{1}$ at each $k \in \bZ$; $x_k \in \cM$, $y_k \in \bR^{m}$, and $v_k: (k, \Omega) \rightarrow \bR^m$ denote the state, output possibly eavesdropped by attackers, and noise designed for the purpose of privacy protection. 

The initial condition $x_{k_0} \in \cM$ is deterministic, where $k_0$ denotes the initial time. The solution to~\eqref{system_1} at $k \in \bZ_{k_0+}$ with the initial condition $x_{k_0}$ is denoted by $\phi_k(x_{k_0})$, i.e., $x_k = \phi_k(x_{k_0})$. Throughout the paper, following \cite{forni2013differential}, we consider a forward complete subset $\cC \subset \cM$ with respect to~\eqref{system_1}, in which any pair of points can be connected by a smooth curve $\gamma : [0,1] \rightarrow \cC$.

In this paper, our interest is to protect the initial state $x_{k_0}$ being inferred from the published output. As illustrated by Example~\ref{ex:1} below, the initial state sometimes contains a piece of private information. Moreover, considering system parameters as a part of the state, privacy-protection of system parameters can be formulated as the initial state privacy problem as illustrated later in Section~\ref{sec:app}. 

To evaluate a privacy level of the initial state, we employ the notion of differential privacy \cite{dwork2006differential} by rewriting the output subsequence $y_{[k_0, k]}$ as a function of the initial state $x_{k_0}$ and noise subsequence $v_{[k_0, k]}$ as in
\begin{align}
\label{eq:mech}
    y_{[k_0, k]} = g(x_{k_0}, k) + v_{[k_0, k]},
\end{align}
where the $i$th component of $g: \cM \times \bZ \to \bR^{k-k_0 + 1}$ is $g_i = (h \circ f_{i} \circ \cdots \circ f_{k_0})(x_{k_0})$, $k_0 \leq i \leq k$. We consider that each component of $v_k$, denoted by $v_{i,k}$, is an identical and independent distributed Laplacian noise at time step $k$, i.e.,
\begin{align}
\label{eq:noise}
    v_{i, k} \sim \lap(0, b_k), \forall i = 1, 2, \dots, m.
\end{align}
with time-dependent deterministic diversity $b_k$. The choice of Laplacian noise is based on its simplicity, but our results can be adapted to other kinds of noises such as Gaussian. Time varying Laplacian noise is commonly used in literature for privacy protection, e.g., \cite{mo2016privacy, nozari2017differentially, wang2023differential}. However, their focus are limited to the linear case in the Euclidean space.

In the reminder of this paper, we call~\eqref{eq:mech} a Laplacian mechanism, following standard notions in differential privacy \cite{dwork2006differential, le2013differentially}. Another standard notion is an adjacency relation, defined below.

\begin{secdefn}
    Given $\zeta>0$, a pair of initial states $(x_{k_0}, x'_{k_0}) \in \cC \times \cC $ is said to belong to the \emph{$\zeta$-adjacency relation} if $d_{\tilde{P}}( x_{k_0} , x'_{k_0}) \leq \zeta$, where $d_{\tilde{P}}$ is given in~\eqref{eq:dis_x}. The set of all pairs of the initial states that are $\zeta$-adjacent is denoted by $\operatorname{Adj}_d^\zeta$.
    \red
\end{secdefn}

The above adjacency relation is a generalized notion of that in literature \cite{dwork2006differential,Yu2020} because we consider a Riemannian distance. {\color{blue}When $\tilde{P} = I_n$, the Riemannian distance is nothing but the Euclidean distance. In this case, our results can be viewed as Laplacian noise counterparts of those in \cite{Yu2020} for Gaussian noise.} This generalization in Riemannian manifold is useful when handling data with inherent geometric constraints, including image data \cite{reimherr2021differential} and system information. This aspect is illustrated by the following example.

\begin{secex}
\label{ex:1}
 Consider a group of agents using an average consensus algorithm to get their meeting location. This meeting location could be the central location of their home address, where each agent's home address (initial state) is private information. Applying an average consensus algorithm gives the agent $i$'s dynamics as
 \begin{align*}
 x^{i}_{k+1} &= x^{i}_{k} - \sum_{j = 1}^{N_i} a_{i,j}(y^{j}_{k} - x^{i}_{k}),
 \quad
 x^{i}_k := \begin{bmatrix}
         x_{1, k}^{i} \\
         x_{2, k}^{i}
     \end{bmatrix}\\
 y^{i}_{k} &=  x^{i}_k + v^{i}_k
 \end{align*}
where $x_{1, k}^{i} \in \bR$ denotes the $x$-axis coordinate, representing the west-to-east direction, $x_{2, k}^{i} \in \bR$ denotes the $y$-axis coordinate, representing the south-to-west direction, $y^{i}_{k} \in \bR^2$ denotes the output of agent $i$, and $v^{i}_{k} \in \bR^2$ denotes
a multidimensional independent Laplacian noise with zero mean and diversity $b^i$.
$a_{i,j} \in (0,1)$ is the $(i, j)$ component of a stochastic matrix $A \in \bR^{N \times N}$ and $\sum_{j=1}^{N_i}a_{i,j} < 1$ for each $i$, and $N_i \in \bZ$ is the cardinality of the neighborhood set of agent $i$.

We suppose that an eavesdropper can access to all $y_{k}^j$, $j \in N_i$. Then, the initial state privacy problem of agent $i$ in average consensus is equivalent to that of a system~\eqref{eq:sys} with
 \begin{align}\label{eq:ex_para}
 x_k:= x^i_k, \quad 
 f_k(x_k) := x_{k} + \sum_{j = 1}^{N_i} a_{i,j} x_{k}, \quad
 h_k(x_k) :=  x_k.
 \end{align}
From \cite[Remark 2.7]{Yu2020} and the inequality between $\ell_1$ norm and $\ell_2$ norm, the agent~$i$ can guarantee $\epsilon$-differential privacy for $\operatorname{Adj}_d^\zeta$ with $\tilde{P} = I_2$ at any time step by choosing $b^i = \sqrt{2}\zeta/(\epsilon\sum_{j=1}^{N_i}a_{ij})$. 
  If an eavesdropper is solely focused on determining whether agent $i$ resides in the northern part of the city, the privacy of $y$-axis coordinate $x_{2,0}^i$ becomes critical. In this context, the  indistinguishable set of true information $x_{2,0}^i$ with $\epsilon$-differential privacy guarantee is $\{x_{2,0}^{i \prime} | |x_{2,0}^{i \prime} - x_{2,0}^i| \leq \sqrt{2} \zeta \}$.

  However, assume agent $i$ is living in a street and this is public information. The street can be described by an affine manifold $\cM:=\{(x_{1,0}^i, x_{2,0}^i) \in \mathbb{R}^2: x_{1,0}^i- 3 x_{2,0}^i=0\}$. Then, a Laplacian noise with the same distribution can only guarantee that the indistinguishable set of $x_{2,0}$ with $\epsilon$-differential privacy guarantee is $\{x_{2,0}^{i \prime} | |x_{2,0}^{i \prime} - x_{2,0}^i| \leq \sqrt{2}\zeta/4 \}$, since after inferring $x_{1,0}^{i}$ and $x_{2, 0}^{i}$ from $y_{k}^j$, one can further utilize the algebraic constraint $x_{1,0}^i- 3 x_{2,0}^i=0$ to narrow down the estimation of $x_{2, 0}^{i}$. This demonstrates that algebraic constraints cannot be ignored for privacy protection. 
  \red
\end{secex}

Now, we are ready to show the definition of differential privacy in terms of the adjacency relation induced by a Riemannian metric. A fundamental idea of differential privacy is to evaluate the sensitivity of a mechanism with respect to privacy-sensitive variables, which is $x_{k_0}$ in this paper. A mechanism~\eqref{eq:mech} is more private if the corresponding pair of output sequences $(y_{[k_0, k]}, y'_{[k_0, k]})$ to a pair of initial states $(x_{k_0}, x^{\prime}_{k_0})$ are close to each other. In differential privacy, the similarity of the pair of output sequences is evaluated as follows.

\begin{secdefn}
\label{def:dp}
Let $(\bR^{(k+1) m}, \cF, \bP )$ be a probability space. Given an increasing sequence $(\epsilon_k)_{k \in \bZ_{k_0+}}$, the mechanism~\eqref{eq:mech} is said to be \emph{$\epsilon_k$-differentially private} for $\operatorname{Adj}_d^\zeta$ if
\begin{align}
\label{eq:dp}
    \bP \left(g(x_{k_0}, k) + v_{[k_0, k]} \in \cS\right) \leq \mathrm{e}^{\epsilon_k} \bP 
\left(g(x^{\prime}_{k_0}, k) +v_{[k_0, k]} \in \cS\right), 
\quad \forall k\in \bZ_{k_0 +}, \; \forall \cS \in \cF
\end{align}
holds for any $(x_{k_0}, x^{\prime}_{k_0}) \in \mathrm{Adj}_d^\zeta$.
\red
\end{secdefn}

There are several remarks on Definition~\ref{def:dp}.
First, the mechanism~\eqref{eq:mech} is highly private if $\epsilon_k$ is small. Thus, it is desirable to have a smaller $\epsilon_k$ at every time instant $k$ to enhance better privacy performance. 
Second, an increasing property of $(\epsilon_k)_{k \in \bZ_{k_0+}}$ represents the fact that as more data are being collected, less private a mechanism becomes. Third, $(\epsilon_k)_{k \in \bZ_{k_0+}}$ is allowed to be an infinite sequence by requiring $\lim_{k\rightarrow \infty} \epsilon_k = \epsilon$ for some finite $\epsilon$.

\section{Differential Privacy Analysis}
\label{sec:main}
As differential privacy is defined by using a pair of output trajectories, differential privacy analysis can be reduced to incremental stability analysis as shown in \cite{Yu2020,le2020differentially}. Extending \cite[Theorem 5.1]{Yu2020} to a Laplace mechanism and a general distance function, we provide a differential privacy condition in terms of the novel concept of output incremental boundedness. Also, we show a sufficient condition for output incremental boundedness.

The following incremental stability notion plays a central role for differential analysis in this paper.
\begin{secdefn}
\label{def:ois}
The system~\eqref{eq:sys} is said to be \emph{output incrementally bounded} if there exist a class $\mathcal{K}$ function $\alpha$ and $(\lambda_k)_{k \in \bZ_{k_0+}}$ such that
    \begin{align}
    \label{eq:ois}
        |h_k(\phi_k(x_{k_0}))-h_k(\phi_k(x_{k_0}^{\prime}))| \leq \lambda_k \alpha(d_{\tilde{P}}(x_{k_0}, x_{k_0}^{\prime}))
    \end{align}
    for all $(x_{k_0}, x^{\prime}_{k_0}) \in \cC \times \cC$, $k_0 \in \bZ$, and $k \in \bZ_{k_0 +}$.
\red
\end{secdefn}

In fact, for an output incremetnally bounded system, the next theorem demonstrates how to design the Laplacian noise~\eqref{eq:noise} such that the mechanism~\eqref{eq:mech} is $\epsilon_k$-differentially private.

\begin{secthm}
\label{thm:con_privacy}
    Given $\zeta > 0$, a mechanism~\eqref{eq:mech} is $\epsilon_k$-differentially private for $\operatorname{Adj}_{d_{\tilde P}}^\zeta$ at any time instant $k \in \bZ_{k_0 +}$ for any $k_0 \in \bZ$ if the system~\eqref{eq:sys} is output incrementally bounded and time-varying Laplacian noise~\eqref{eq:noise} is designed as
    \begin{align}\label{eq:privacy}
        \frac{\lambda_k \alpha(\zeta)}{\epsilon_k - \epsilon_{k-1}} \leq b_k,
    \end{align}
    where $\epsilon_{-1} : = 0.$
\end{secthm}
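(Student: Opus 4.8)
The plan is to run the classical density-ratio argument for the Laplace mechanism, with two adaptations: the role of the ``sensitivity'' is played by the incremental output bound \eqref{eq:ois}, and the telescoping of the budget increments $\epsilon_k-\epsilon_{k-1}$ accounts for the progressive accumulation of published outputs. Fix $k_0\in\bZ$ and $k\in\bZ_{k_0+}$, and set $N:=(k-k_0+1)m$ for the dimension of $y_{[k_0,k]}$. Since the entries of $v_{[k_0,k]}$ are independent with $v_{i,j}\sim\lap(0,b_j)$, the vector $v_{[k_0,k]}$ has Lebesgue density $p(v)=\prod_{j=k_0}^{k}\prod_{i=1}^{m}(2b_j)^{-1}\mathrm{e}^{-|v_{i,j}|/b_j}$ on $\bR^{N}$. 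Hence, for every $\cS\in\cF$,
\begin{align*}
\bP\bigl(g(x_{k_0},k)+v_{[k_0,k]}\in\cS\bigr)=\int_{\cS}p\bigl(y-g(x_{k_0},k)\bigr)\,dy,
\end{align*}
and likewise with $x_{k_0}'$ in place of $x_{k_0}$.

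First I would bound the pointwise ratio of the two integrands. Grouping the product blockwise in $j$ and applying the reverse triangle inequality $|a-b'|-|a-b|\le|b-b'|$ coordinatewise, while recalling that $|\cdot|$ denotes the $\ell_1$ norm in this paper and that the $j$-block of $g(\cdot,k)$ is $h_j(\phi_j(\cdot))$,
\begin{align*}
\frac{p(y-g(x_{k_0},k))}{p(y-g(x_{k_0}',k))}\le\exp\!\left(\sum_{j=k_0}^{k}\frac{1}{b_j}\bigl|h_j(\phi_j(x_{k_0}))-h_j(\phi_j(x_{k_0}'))\bigr|\right).
\end{align*}
Next I would invoke output incremental boundedness (Definition~\ref{def:ois}): each block is at most $\lambda_j\,\alpha\bigl(d_{\tilde P}(x_{k_0},x_{k_0}')\bigr)$, and since $(x_{k_0},x_{k_0}')\in\operatorname{Adj}_{d_{\tilde P}}^\zeta$ and $\alpha\in\mathcal{K}$ is increasing, this is bounded by $\lambda_j\alpha(\zeta)$. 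Substituting and then using the design rule \eqref{eq:privacy} rewritten as $\lambda_j\alpha(\zeta)/b_j\le\epsilon_j-\epsilon_{j-1}$ gives
\begin{align*}
\frac{p(y-g(x_{k_0},k))}{p(y-g(x_{k_0}',k))}\le\exp\!\left(\sum_{j=k_0}^{k}\bigl(\epsilon_j-\epsilon_{j-1}\bigr)\right)=\exp\bigl(\epsilon_k-\epsilon_{k_0-1}\bigr)=\mathrm{e}^{\epsilon_k},
\end{align*}
where the sum telescopes and $\epsilon_{k_0-1}:=0$ (this is the stipulation $\epsilon_{-1}:=0$ in the statement, taken at the initial index). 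Integrating this pointwise bound over $\cS$ yields $\bP(g(x_{k_0},k)+v_{[k_0,k]}\in\cS)\le\mathrm{e}^{\epsilon_k}\,\bP(g(x_{k_0}',k)+v_{[k_0,k]}\in\cS)$, which is precisely \eqref{eq:dp}; since $k$ and $k_0$ were arbitrary, the conclusion follows.

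I do not anticipate a genuine obstacle: the argument is a direct manifold-valued, time-varying adaptation of the standard Laplace-mechanism proof, and all the heavy lifting (the geometry, the contraction-type estimates) is already encapsulated in the hypothesis of output incremental boundedness. The only points requiring care are bookkeeping rather than ideas: (i) keeping the blockwise/coordinatewise accounting consistent so that the $\ell_1$ norm in \eqref{eq:ois} exactly matches the sum of per-coordinate exponents produced by the product density; (ii) checking that $\operatorname{Adj}_{d_{\tilde P}}^\zeta\subseteq\cC\times\cC$ so that Definition~\ref{def:ois} is applicable to the adjacent pair; and (iii) the index convention for $\epsilon_{k_0-1}$, which is what makes the telescoped exponent equal to $\epsilon_k$ rather than $\epsilon_k-\epsilon_{k_0-1}$ with an undefined lower term.
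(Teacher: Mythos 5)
Your proposal is correct and follows essentially the same route as the paper's proof: bound the ratio of shifted Laplace densities via the triangle inequality, invoke output incremental boundedness together with the adjacency bound to control each per-step exponent by $\lambda_j\alpha(\zeta)/b_j$, and telescope the increments $\epsilon_j-\epsilon_{j-1}$ to obtain $\mathrm{e}^{\epsilon_k}$. The only (minor, and in your favor) difference is that you integrate the pointwise density-ratio bound over an arbitrary $\cS\in\cF$, whereas the paper verifies the inequality only on product sets $S_{k_0}\times\cdots\times S_{k}$.
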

{\color{blue}
\begin{proof}
     Let $\mathcal{S} = S_1 \times S_2 \times \cdots \times S_{k}$, where $S_{i} \subseteq \bR^{m}$ for all $i = 1, 2, \dots, k$. From the definition of Laplacian distribution, we have
\begin{align*} 
\mathbb{P}(y_{[k_0, k]} \in \mathcal{S} ) &=  \prod_{i=k_0}^{k}\mathbb{P}(y_i \in {S}_i)\\
&= \prod_{i=k_0}^{k} (\frac{1}{2b_i})^{m}\int_{\bR^m} 1_{S_i}(h_i(\phi_i(x_{k_0})) + v) e^{-\frac{|v|}{b_i}} d v \\
&= \prod_{i=k_0}^{k} (\frac{1}{2b_i})^{m}\int_{\bR^m} 1_{S_i}(u) e^{-\frac{\left|u-h_i(\phi_i(x_{k_0})) \right|}{b_i}} d u \\
&\leq \prod_{i=k_0}^{k} e^{\frac{\left|h_i(\phi_i(x_{k_0}))-h_i(\phi_i(x_{k_0}^{\prime}))\right|}{b_i}}  \left(\frac{1}{2 b_i}\right)^m \int_{\mathbb{R}^m} 1_S(u) e^{-\frac{\left|u-h_i(\phi_i(x_{k_0}^{\prime}))\right|}{b_i}} du \\
&\leq \prod_{i=k_0}^{k} e^{\frac{\lambda_i \alpha (\zeta
)}{ b_i}}  \left(\frac{1}{2 b_i}\right)^m \int_{\mathbb{R}^m} 1_S(u) e^{-\frac{\left|u-h_i(\phi_i(x_{k_0}^{\prime}))\right|}{b_i}} du.
\end{align*}
Since~\eqref{eq:privacy} holds, it follows that
\begin{align*}
&\prod_{i=k_0}^{k} e^{\frac{\lambda_i \alpha (\zeta
)}{ b_i}}  \left(\frac{1}{2 b_i}\right)^m \int_{\mathbb{R}^m} 1_S(u) e^{-\frac{\left|u-h_i(\phi_i(x_{k_0}^{\prime}))\right|}{b_i}} du \\
&\leq \prod_{i=k_0}^{k} e^{\epsilon_i - \epsilon_{i-1}}  \left(\frac{1}{2 b_i}\right)^m \int_{\mathbb{R}^m} 1_S(u) e^{-\frac{\left|u-h_i(\phi_i(x_{k_0}^{\prime}))\right|}{b_i}} du \\
& = e^{\epsilon_k} \mathbb{P}(y_{[k_0, k]}^{\prime} \in \mathcal{S} ).
\end{align*}
This ends the proof.
     \qed
 \end{proof}}

{\color{blue} It can be observed that our strategy ~\eqref{eq:privacy} for adding Laplacian noise is similar to the approach described in \cite[Remark 2.7]{Yu2020}. The key distinction lies in the fact that our initial state is defined on a Riemannian space, providing a broader level of generality. By tailoring the proposed approach to Gaussian mechanisms, the results in \cite{Yu2020} can also be extended to Riemannian manifolds. In this paper, we utilize a distance function induced by the Riemannian metric. However, the result remains valid even when the distance function is defined in a metric space, implying that the results in \cite{Yu2020} can also be generalized to any metric space.}

Selecting $\lambda_k$ in~\eqref{eq:ois} as $\lambda_k = \bar c \bar \lambda^{k-k_0}$ for $\bar c> 0$ and $\bar \lambda \in (0,1)$ recovers output incrementally exponentially stability \cite{yin2023output}. In this case, the system is $\epsilon_k$-differentially private with $\epsilon_k = c \sum_{i=k_0}^{k} q^{i - k_0}$ with $q \in [\bar{\lambda}, 1)$ and $c > 0$ if we select $b_k = \bar{c} \alpha(\zeta) \bar{\lambda}^{k - k_0}/(cq^{k-k_0})$. 
Furthermore, by selecting $c = \epsilon (1-q)$, we have $\epsilon_k \leq \epsilon$ for all $k \in \bZ_{k_0+}$ for arbitrary $\epsilon > 0$, i.e., the privacy budget is finite in an infinite time interval, which recovers the privacy analysis in \cite{nozari2017differentially}.

{\color{blue}At the end of this section, we provide a sufficient condition for output incrementally boundedness in case of the Riemmanian space.}

\begin{secthm}
\label{thm:oib}
A system~\eqref{eq:sys} is output incrementally bounded if there exist $c_1, c_2 > 0$ and $P(x, k) : \cM \times \bZ  \rightarrow \bS^{n \times n}_{\succeq 0}$ such that
\begin{subequations}
\label{eq:oib}
    \begin{align}
        c_1^2\frac{\partial^{\top} h_k(x)}{\partial x}\frac{\partial h_k(x)}{\partial x} \preceq P(x,k) \preceq c_2^2 \tilde{P}(x), \label{con_1}\\
        \lambda_{k}^2 \frac{\partial^{\top} f_k(x)}{\partial x} P(f_k(x),k+1) \frac{\partial f_k(x)}{\partial x} \preceq \lambda_{k+1}^2 P(x,k), \label{con_2}
    \end{align}
    \end{subequations}
    for all $(x,k) \in \cC \times \bZ_{k_0 +}$.
\end{secthm}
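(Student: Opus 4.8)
The plan is to bound the incremental output gap $|h_k(\phi_k(x_{k_0})) - h_k(\phi_k(x_{k_0}'))|$ along a minimizing (or near-minimizing) geodesic connecting $x_{k_0}$ and $x_{k_0}'$ and to propagate this bound through the dynamics using the two matrix inequalities in~\eqref{eq:oib}. First I would fix $(x_{k_0},x_{k_0}') \in \cC \times \cC$ and choose a $C^1$ path $\gamma_{k_0} : [0,1] \to \cC$ with $\gamma_{k_0}(0) = x_{k_0}$, $\gamma_{k_0}(1) = x_{k_0}'$, realizing (up to $\varepsilon$) the Riemannian distance $d_{\tilde P}(x_{k_0}, x_{k_0}')$. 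Pushing this path forward by the flow, define $\gamma_k(s) := \phi_k(\gamma_{k_0}(s)) = (f_{k-1}\circ\cdots\circ f_{k_0})(\gamma_{k_0}(s))$; by forward completeness of $\cC$ this stays in $\cC$, and it connects $\phi_k(x_{k_0})$ to $\phi_k(x_{k_0}')$. The key quantity to track is the ``$P$-length'' of the transported path,
\begin{align*}
L_k := \int_0^1 \sqrt{\frac{d^\top\gamma_k(s)}{ds} P(\gamma_k(s),k) \frac{d\gamma_k(s)}{ds}}\, ds,
\end{align*}
and I would show (i) that $\lambda_k L_k$ is nonincreasing in $k$, and (ii) that $|h_k(\phi_k(x_{k_0})) - h_k(\phi_k(x_{k_0}'))| \le \frac{1}{c_1} L_k$, and (iii) that $L_{k_0} \le c_2 \, d_{\tilde P}(x_{k_0},x_{k_0}') + O(\varepsilon)$.

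For (i), differentiate the chain rule identity $\frac{d\gamma_{k+1}}{ds} = \frac{\partial f_k}{\partial x}(\gamma_k(s)) \frac{d\gamma_k}{ds}$ and substitute into the integrand defining $L_{k+1}$; inequality~\eqref{con_2} evaluated at $x = \gamma_k(s)$ gives pointwise $\lambda_{k+1}^2 \frac{d^\top\gamma_{k+1}}{ds} P(\gamma_{k+1},k+1)\frac{d\gamma_{k+1}}{ds} \le \lambda_k^2 \frac{d^\top\gamma_k}{ds} P(\gamma_k,k)\frac{d\gamma_k}{ds}$ (after identifying $P(f_k(\gamma_k(s)),k+1) = P(\gamma_{k+1}(s),k+1)$), so taking square roots and integrating yields $\lambda_{k+1} L_{k+1} \le \lambda_k L_k$, hence $\lambda_k L_k \le \lambda_{k_0} L_{k_0}$. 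For (ii), write $h_k(\phi_k(x_{k_0}')) - h_k(\phi_k(x_{k_0})) = \int_0^1 \frac{\partial h_k}{\partial x}(\gamma_k(s)) \frac{d\gamma_k}{ds}\, ds$; by the triangle inequality for $|\cdot|$ (the $\ell_1$ norm) and the left inequality in~\eqref{con_1}, each integrand satisfies $\big|\frac{\partial h_k}{\partial x}(\gamma_k(s)) \frac{d\gamma_k}{ds}\big| \le \frac{1}{c_1}\sqrt{\frac{d^\top\gamma_k}{ds} P(\gamma_k,k) \frac{d\gamma_k}{ds}}$, so integrating gives $|h_k(\phi_k(x_{k_0})) - h_k(\phi_k(x_{k_0}'))| \le \frac{1}{c_1} L_k$. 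For (iii), the right inequality in~\eqref{con_1} gives $\sqrt{\frac{d^\top\gamma_{k_0}}{ds} P(\gamma_{k_0},k_0)\frac{d\gamma_{k_0}}{ds}} \le c_2 \sqrt{\frac{d^\top\gamma_{k_0}}{ds}\tilde P(\gamma_{k_0})\frac{d\gamma_{k_0}}{ds}}$, and integrating and using that $\gamma_{k_0}$ nearly realizes the infimum in~\eqref{eq:dis_x} yields $L_{k_0} \le c_2\, d_{\tilde P}(x_{k_0},x_{k_0}') + \varepsilon$. Chaining (i)--(iii),
\begin{align*}
|h_k(\phi_k(x_{k_0})) - h_k(\phi_k(x_{k_0}'))| \le \frac{1}{c_1} L_k \le \frac{\lambda_k}{c_1 \lambda_{k_0}} L_{k_0} \le \frac{\lambda_k}{c_1 \lambda_{k_0}}\big(c_2\, d_{\tilde P}(x_{k_0},x_{k_0}') + \varepsilon\big),
\end{align*}
and letting $\varepsilon \to 0$ identifies~\eqref{eq:ois} with the class $\mathcal{K}$ function $\alpha(r) = \frac{c_2}{c_1\lambda_{k_0}} r$ (linear, hence in $\mathcal{K}$) and the sequence $(\lambda_k)$; a mild rescaling absorbs $\lambda_{k_0}$ into the constant so that $\alpha$ is independent of $k_0$.

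I expect the main technical obstacle to be handling the infimum in the definition of $d_{\tilde P}$ cleanly: minimizing geodesics need not exist in an arbitrary $C^1$ Riemannian manifold, so the argument must be run with $\varepsilon$-near-minimizers and a limiting argument, and one must check that pushing a near-minimizer forward by $\phi_k$ keeps it $C^1$ and inside the forward-complete set $\cC$ (this is exactly where the standing assumption on $\cC$ from \cite{forni2013differential} is used). A secondary subtlety is the dependence of $\alpha$ and the constant on the initial time $k_0$ through $\lambda_{k_0}$: Definition~\ref{def:ois} asks for a single $\alpha$ valid for all $k_0$, so I would either normalize $(\lambda_k)$ (e.g.\ divide through by $\lambda_{k_0}$, which is harmless since only the ratio $\lambda_k/\lambda_{k_0}$ enters) or simply note that $\alpha$ can be taken as $\alpha(r)=c_2 r/c_1$ together with the reindexed sequence $\lambda_k/\lambda_{k_0}$. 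The rest is routine: the chain rule for compositions of $C^1$ maps, the pointwise quadratic-form inequalities from~\eqref{eq:oib}, and monotonicity of the square root.
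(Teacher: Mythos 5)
Your argument is essentially the paper's proof with the black box opened: the paper simply invokes \cite[Theorem 3.2]{kawano2023contraction} to obtain the contraction estimate \eqref{eq:oib_hopf}, identifies its left-hand side with $|h_k(\phi_k(x_{k_0}))-h_k(\phi_k(x'_{k_0}))|_2$, and converts to the $\ell_1$ norm at the cost of a factor $\sqrt{m}$; your transported-path/$P$-length computation is precisely the argument underlying that cited estimate, so the two routes coincide. Two local slips should be fixed. First, your step (i) has the $\lambda$ subscripts transposed: sandwiching \eqref{con_2} with $d\gamma_k/ds$ gives
\begin{align*}
\lambda_k^2\,\frac{d^\top\gamma_{k+1}}{ds}P(\gamma_{k+1},k+1)\frac{d\gamma_{k+1}}{ds}\;\le\;\lambda_{k+1}^2\,\frac{d^\top\gamma_k}{ds}P(\gamma_k,k)\frac{d\gamma_k}{ds},
\end{align*}
hence $L_{k+1}/\lambda_{k+1}\le L_k/\lambda_k$; it is $L_k/\lambda_k$, not $\lambda_k L_k$, that is nonincreasing. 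Your final chain $L_k\le(\lambda_k/\lambda_{k_0})L_{k_0}$ is the correct consequence of \eqref{con_2}, but it does not follow from the monotonicity statement you actually wrote down. Second, step (ii) as stated is false for $m>1$: the left inequality in \eqref{con_1} bounds the $\ell_2$ norm of $\frac{\partial h_k}{\partial x}\frac{d\gamma_k}{ds}$ by $\frac{1}{c_1}\bigl(\frac{d^\top\gamma_k}{ds}P(\gamma_k,k)\frac{d\gamma_k}{ds}\bigr)^{1/2}$, so the $\ell_1$ bound requires the extra factor $\sqrt{m}$, and your final gain should read $\alpha(r)=\sqrt{m}\,c_2 r/(c_1\lambda_{k_0})$, in agreement with the paper. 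Both repairs are purely cosmetic; the structure (near-minimizing path, push-forward by the flow, weighted $P$-length monotonicity, endpoint bound via \eqref{con_1}, normalization by $\lambda_{k_0}$) is sound and matches the paper's.
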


\begin{proof}
Let $\Gamma (x, x')$ be the collection of class $C^1$ paths $\gamma :[0, 1] \to \cC$ connecting $\gamma (0) = x$ and $\gamma (1) = x'$. Applying~\cite[Theorem 3.2]{kawano2023contraction}, we have 
\begin{align}
\label{eq:oib_hopf}
\inf_{\Gamma (x_{k_0}, x'_{k_0})} \int_0^1 \left( \frac{d^\top h(\phi_{k}(\gamma (s)))}{ds} \frac{dh(\phi_{k}(\gamma (s)))}{ds} \right)^{1/2} ds 
\le (1 + \varepsilon) \frac{c_2}{c_1} \frac{\lambda_k}{\lambda_{k_0}} d_{\tilde P} (x_{k_0}, x'_{k_0})
\end{align}
for any $(x_{k_0}, x^{\prime}_{k_0}) \times \cC \times \cC$, $k_0 \in \bZ$, $k \in \bZ_{k_0 +}$, and $\varepsilon > 0$. 

Since $h(\phi_{k}(\gamma (s)))$ is a path connecting $h(\phi_{k}(\gamma (x_{k_0})))$ and $h(\phi_{k}(\gamma (x'_{k_0})))$, the left-hand side is nothing but the squared path length of $h(\phi_{k}(\gamma (s)))$. Thus, we have 
\begin{align}
\label{eq:l_2}
|h_k(\phi_k(x_{k_0}))-h_k(\phi_k(x_{k_0}^{\prime}))|_2
= \inf_{\Gamma (x_{k_0}, x'_{k_0})} \int_0^1 \left( \frac{d^\top h(\phi_{k}(\gamma (s)))}{ds} \frac{dh(\phi_{k}(\gamma (s)))}{ds} \right)^{1/2} ds.
\end{align}
{\color{blue}Substituting \eqref{eq:l_2} into \eqref{eq:oib_hopf} yields
\begin{align*}
|h_k(\phi_k(x_{k_0}))-h_k(\phi_k(x_{k_0}^{\prime}))|_2 
\leq (1+\varepsilon ) \frac{c_2}{c_1} \frac{\lambda_k}{\lambda_{k_0}} d_{\tilde{P}}(x_{k_0}, x_{k_0}^{\prime}),
\end{align*}
where the distance function $d_{\tilde{P}}(x, x^{\prime})$ is induced by Riemannian metric $\tilde P(x)$ in \eqref{eq:oib} based on formula \eqref{eq:dis_x}. From the equivalence between vector $1$- and $2$-norms, it follows that 
\begin{align*}
|h_k(\phi_k(x_{k_0}))-h_k(\phi_k(x_{k_0}^{\prime}))|
\leq (1+\varepsilon ) \sqrt{m} \frac{c_2}{c_1} \frac{\lambda_k}{\lambda_{k_0}} d_{\tilde{P}}(x_{k_0}, x_{k_0}^{\prime}).
\end{align*}
Namely, we have \eqref{eq:ois} for $\alpha (\zeta) = (1+\varepsilon ) (\sqrt{m} c_2/c_1 k_0) \zeta$, which completes the proof.}
\qed
 \end{proof}

{\color{blue} The results can potentially be generalized to other manifolds where distance functions are defined, such as Finsler manifolds, by replacing $\tilde{P}$ with a Finsler metric. However, the extension to manifolds where distance functions are not defined remains unclear, as defining an adjacency relationship without distance is challenging.}

 \section{Application in System Parameter Protection}
\label{sec:app}
In this section, we apply the results in the previous section to protect system parameters being identified from a published output sequence. 

 Consider the following linear time-invariant system:
\begin{subequations}\label{eq:scalar}
\begin{align}
    z_{k+1} &= A(\theta) z_{k}, \\
    y_{k} &= z_k + v_{k},
\end{align}
\end{subequations}
where $\theta \in \bR$ is the private system parameter and $A(\theta) \in \bR^{n \times n}$ . By extending the system dimension through considering $\theta$ as an additional state, we obtain the following nonlinear system:
\begin{subequations}\label{eq:scalar2}
\begin{align}
    z_{k+1} &= A(\theta_k) z_{k}, \\
    \theta_{k+1} &= \theta_k,  \\
    y_{k} &= z_k + v_{k},
\end{align}
\end{subequations}
where $\theta_{k_0} = \theta$. Then, protecting system parameter $\theta$ can be formulated as the initial state privacy problem. This technique is merely standard in parameter estimation. However, even for linear systems, problems become nonlinear.

 Assuming $\theta > 0$, we consider protecting the proportion range of $\theta$. In this case, it is standard to evaluate the Rao-Fisher distance given by \cite{skovgaard1984riemannian}:
\begin{align}
\label{eq:dis_a}
    d_{\tilde{P}}(\theta, \theta^{\prime}) = |\log (\theta^{-1} \theta^{\prime})|
\end{align}
as this is widely used in medical imaging \cite{barachant2011multiclass}, radar signal processing \cite{arnaudon2013riemannian}, and continuum mechanics \cite{moakher2006averaging}. The corresponding Riemmanian metric is $\tilde P (\theta) = 1/\theta^2$.

 In fact, applying Theorems~\ref{thm:con_privacy} and~\ref{thm:oib} to~\eqref{eq:scalar2}, we have the following differential privacy condition.
     \begin{secthm}
\label{thm:sys_privacy}
Consider a system~\eqref{eq:scalar2} and suppose $|A(\theta)|_2 \leq \lambda \leq 1$ and $|dA(\theta)/d\theta|_2 \leq 1$.
    Given $\zeta > 0$, the mechanism~\eqref{eq:mech} is $\epsilon_k$-differentially private for $\operatorname{Adj}_d^\zeta$ with $\tilde P(\theta) = 1/\theta^2$ at any time instant $k$ for any $k_0 \in \bZ$ if time-varying Laplace noise~\eqref{eq:noise} is designed as
    \begin{align}
        \frac{\bar{\lambda}^{k-k_0} \zeta \sqrt{n} \max(\bar{\theta} \beta,1)
        } {\epsilon_k - \epsilon_{k-1}} \leq b_k, \quad 
        \beta := \frac{\bar{\lambda} \mu }{\bar{\lambda}^2 - \lambda^2}, \; \
        \epsilon_{k_0 -1} := 0
        \label{con_sysn}
    \end{align}
    for any $\bar{\lambda} > \lambda$ and all $(z_{k_0}, \theta_{k_0}) \in \bR \times \bR$ such that $0 < \theta_{k_0} \leq \bar{\theta}$ and $|z_{k_0}| \leq \mu$ and all $k \in \bZ_{{k_0}+}$.
\end{secthm}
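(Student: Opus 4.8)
The plan is to verify the hypotheses of Theorem~\ref{thm:oib} for the extended nonlinear system~\eqref{eq:scalar2} with state $x_k = (z_k^\top, \theta_k)^\top$, thereby obtaining output incremental boundedness with explicit $\lambda_k$ and $\alpha$, and then to feed these into Theorem~\ref{thm:con_privacy} to read off the noise bound~\eqref{con_sysn}. The Riemannian metric on the extended state space is $\tilde P(x) = \operatorname{diag}(I_n, 1/\theta^2)$, matching $\tilde P(\theta) = 1/\theta^2$ from~\eqref{eq:dis_a}, and the output map is $h_k(x) = z$ so that $\partial h_k/\partial x = [I_n \ \ 0]$. The first step is to choose the candidate $P(x,k)$. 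A natural guess is a block-diagonal $P(x,k) = \operatorname{diag}(I_n, \rho)$ with a constant $\rho > 0$ to be fixed; condition~\eqref{con_1} then requires $c_1^2 I_n \preceq I_n$ (take $c_1 = 1$) on the $z$-block and $\rho \le c_2^2/\theta^2$ on the $\theta$-block, i.e. $\rho \bar\theta^2 \le c_2^2$ using $\theta_{k_0}\le\bar\theta$; on the $z$-block $1 \le c_2^2$ forces $c_2 \ge 1$, so we set $c_2 = \max(\bar\theta\sqrt{\rho},1)$, which is where the $\max(\bar\theta\beta,1)$ in~\eqref{con_sysn} will come from once $\rho = \beta^2$.

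The second and main step is the contraction inequality~\eqref{con_2}. With $\lambda_k = \bar\lambda^{k-k_0}$ the inequality becomes, after dividing by $\lambda_{k+1}^2 = \bar\lambda^2\lambda_k^2$,
\[
\frac{1}{\bar\lambda^2}\begin{bmatrix} A^\top A & A^\top (dA/d\theta)z \\ z^\top(dA/d\theta)^\top A & z^\top (dA/d\theta)^\top(dA/d\theta)z + \rho\end{bmatrix} \preceq \begin{bmatrix} I_n & 0 \\ 0 & \rho\end{bmatrix},
\]
where I used $\partial f/\partial z = A(\theta)$, $\partial f/\partial\theta = (dA/d\theta)z$ and $\theta_{k+1}=\theta_k$. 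Using $|A|_2\le\lambda$, $|dA/d\theta|_2\le 1$, $|z_{k_0}|\le\mu$ and the fact that $|z_k|$ stays bounded by $\mu$ along the trajectory (since $|A|_2\le\lambda\le 1$ is contracting on $z$), a Schur-complement argument will reduce the matrix inequality to a scalar condition of the form $\rho(\bar\lambda^2-\lambda^2) \ge \bar\lambda^2\mu^2$ (roughly), i.e. $\rho \ge \bar\lambda^2\mu^2/(\bar\lambda^2-\lambda^2)$; choosing $\rho = \beta^2$ with $\beta = \bar\lambda\mu/(\bar\lambda^2-\lambda^2)$ as in~\eqref{con_sysn} must close this. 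The delicate point is handling the off-diagonal coupling term $A^\top(dA/d\theta)z$: I expect to bound it via Young's inequality (or a direct Schur complement on the $2\times 2$ block structure after substituting the operator-norm bounds), and to be careful that the bound on $|z_k|$ is uniform in $k$ — this uniformity is exactly why the hypothesis $\lambda\le 1$ is imposed rather than merely $\lambda<\infty$.

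The third step is bookkeeping: Theorem~\ref{thm:oib} then yields~\eqref{eq:ois} with $\lambda_k = \bar\lambda^{k-k_0}$, $\lambda_{k_0}=1$, and $\alpha(\zeta) = (1+\varepsilon)\sqrt{n}\,(c_2/c_1)\zeta = (1+\varepsilon)\sqrt{n}\max(\bar\theta\beta,1)\zeta$ (with $m=n$ here since the output is $z\in\bR^n$). Plugging $\lambda_k\alpha(\zeta) = (1+\varepsilon)\bar\lambda^{k-k_0}\sqrt{n}\max(\bar\theta\beta,1)\zeta$ into the noise condition~\eqref{eq:privacy} of Theorem~\ref{thm:con_privacy}, namely $\lambda_k\alpha(\zeta)/(\epsilon_k-\epsilon_{k-1})\le b_k$, and letting $\varepsilon\to 0$, gives exactly~\eqref{con_sysn}. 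I would remark that the $\varepsilon$ can be absorbed because~\eqref{eq:oib_hopf} holds for every $\varepsilon>0$, so the limiting inequality is valid. The main obstacle, as noted, is verifying~\eqref{con_2} cleanly — in particular producing the precise constant $\beta = \bar\lambda\mu/(\bar\lambda^2-\lambda^2)$ from the Schur complement rather than a cruder bound, and justifying the uniform-in-$k$ state bound $|z_k|\le\mu$ that the off-diagonal term needs.
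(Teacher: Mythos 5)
Your overall strategy is the paper's: verify Theorem~\ref{thm:oib} for the extended system with a block-diagonal $P$, then feed $\lambda_k=\bar\lambda^{k-k_0}$ and $\alpha$ into Theorem~\ref{thm:con_privacy}. The choice $c_1=1$, $c_2=\max(\bar\theta\beta,1)$ and the final bookkeeping are also right. But there is a genuine gap in your main step: the ansatz $P(x,k)=\operatorname{diag}(I_n,\rho)$ with \emph{constant} $\rho$ cannot satisfy~\eqref{con_2}. After dividing by $\lambda_k^2$, the $(2,2)$ diagonal entry of the matrix that must be positive semi-definite is
\begin{align*}
(\bar\lambda^2-1)\rho-\Bigl|\tfrac{dA(\theta)}{d\theta}z_k\Bigr|_2^2 ,
\end{align*}
which is negative whenever $\bar\lambda\le 1$ and $z_k\neq 0$ — and the theorem is claimed for \emph{any} $\bar\lambda>\lambda$ with $\lambda\le 1$, so $\bar\lambda\le 1$ is a case you must cover (it is also the interesting case, since one wants $\bar\lambda<1$ for a finite privacy budget). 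Even for $\bar\lambda>1$, the scalar condition you would extract is $(\bar\lambda^2-1)\rho\gtrsim\mu^2$, not $\rho(\bar\lambda^2-\lambda^2)\ge\bar\lambda^2\mu^2$, so the constant $\beta=\bar\lambda\mu/(\bar\lambda^2-\lambda^2)$ would not emerge.

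The fix, which is what the paper does, is to make the $\theta$-block of $P$ time-varying and decaying: $P(x,k)=\operatorname{diag}(I_n,\lambda^{2(k-k_0)}\beta^2)$. Then the $(2,2)$ entry becomes $\bar\lambda^2\lambda^{2(k-k_0)}\beta^2-\lambda^{2(k+1-k_0)}\beta^2-|(dA/d\theta)z_k|_2^2=(\bar\lambda^2-\lambda^2)\lambda^{2(k-k_0)}\beta^2-|(dA/d\theta)z_k|_2^2$, and — crucially — you must pair this with the \emph{decaying} state bound $z_k^\top z_k\le\lambda^{2(k-k_0)}\mu^2$ (from $|A(\theta)|_2\le\lambda$), not merely the uniform bound $|z_k|\le\mu$ you invoke. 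With these two ingredients the entry is bounded below by $\lambda^{2(k-k_0)}\mu^2\lambda^2/(\bar\lambda^2-\lambda^2)>0$ for the stated $\beta$, and the Schur complement against $\bar\lambda^2I_n-A^\top A\succeq(\bar\lambda^2-\lambda^2)I_n$ closes exactly, for every $\bar\lambda>\lambda$. The rest of your argument (the $\varepsilon\to 0$ absorption and substitution into~\eqref{eq:privacy}) then goes through as you describe.
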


\begin{proof}
     We first establish an output incrementally bounded property for system~\eqref{eq:scalar} by Theorem~\ref{thm:oib}. Let $c_1 =1$, $c_2 = \max(\bar{\theta} \beta,1) $, $\lambda_k = \bar{\lambda}^{k-k_0}$, and $P(z,k) = {\rm diag}\{I_n, \lambda^{2(k-k_0)}\beta^2\}$. Then,~\eqref{con_1} holds. Next, we confirm~\eqref{con_2}. It follows that 
    \begin{align*}
     &\lambda_k^2 \frac{\partial^{\top} f_k(x_k)}{\partial x} P(f_k(x_k),k+1) \frac{\partial f_k(x_k)}{\partial x_k} \\
     &= \bar{\lambda}^{2(k-k_0)}\begin{bmatrix}
         A^{\top}(\theta) & 0\\
         z_k^{\top}\frac{d^{\top} A(\theta)}{d\theta} & 1
     \end{bmatrix}
     \begin{bmatrix}
         I_n & 0\\
         0 & \lambda^{2(k+1-k_0)}\beta^2 
     \end{bmatrix}
     \begin{bmatrix}
         A(\theta) & \frac{d A(\theta)}{d\theta} z_k\\
         0 & 1
     \end{bmatrix} \\
     & = \bar{\lambda}^{2(k-k_0)}\begin{bmatrix}
        A^{\top}(\theta) A(\theta) & A^{\top}(\theta) \frac{d A(\theta)}{d \theta} z_k \\
         z_k^{\top} \frac{d^{\top} A(\theta)}{d \theta}A(\theta)  & z_k^{\top}\frac{d^{\top} A(\theta)}{d \theta}\frac{d A(\theta)}{d \theta}z_k + \lambda^{2(k+1-k_0)} \beta^2
     \end{bmatrix},
    \end{align*}
and consequently,
    \begin{align}\label{eq:block}
     &\lambda_{k+1}^2 P(x_k,k) - \lambda_k^2 \frac{\partial^{\top} f_k(x_k)}{\partial x_k} P(f_k(x_k),k+1) \frac{\partial f_k(x_k)}{\partial x_k} \nonumber\\ 
     &= \bar{\lambda}^{2(k-k_0)}\left( \begin{bmatrix}
        \bar{\lambda}^2 I_n - A^{\top}(\theta) A(\theta) & -A^{\top}(\theta) \frac{d A(\theta)}{d \theta} z_k \\
         -z_k^{\top} \frac{d^{\top} A(\theta)}{d \theta}A(\theta)  & \bar{\lambda}^2\lambda^{2(k+1-k_0)} \beta^2 - (z_k^{\top}\frac{d^{\top} A(\theta)}{d \theta}\frac{d A(\theta)}{d \theta}z_k +  \lambda^{2(k+1-k_0)} \beta^2)
     \end{bmatrix} \right). 
    \end{align}
Since $\bar{\lambda}^2 - A^{\top}(\theta) A(\theta) \succeq 0$, this matrix is positive semi-definite if and only if its Schur complement is positive semi-definite. 

{\color{blue}To prove this, we first estimate a lower bound on the lower right block of the matrix in the right-hand side of~\eqref{eq:block}.} Using $z_k^{\top}z_k \leq \lambda^{2(k-k_0)}\mu^2$ and $\beta = \bar{\lambda} \mu /(\bar{\lambda}^2 - \lambda^2)$, we have 
\begin{align*}
    &\bar{\lambda}^2\lambda^{2(k+1-k_0)} \beta^2 - (z_k^{\top}\frac{d^{\top} A(\theta)}{d \theta}\frac{d A(\theta)}{d \theta}z_k +  \lambda^{2(k+1-k_0)} \beta^2) \\
    &\geq \bar{\lambda}^2\lambda^{2(k-k_0)} \beta^2  - \lambda^{2(k+1-k_0)} \beta^2  - \lambda^{2(k-k_0)}\mu^2 \\
    & = \lambda^{2(k-k_0)} \mu^2 \left(\frac{\bar{\lambda}^4}{(\bar{\lambda}^2 - \lambda^2)^2} - \frac{\bar{\lambda}^2 \lambda^2}{(\bar{\lambda}^2 - \lambda^2)^2}- 1\right) \\
    & = \lambda^{2(k-k_0)} \mu^2 \frac{ \lambda^2}{\bar{\lambda}^2 - \lambda^2}.
\end{align*}
{\color{blue}Then, the Schur complement of the right-hand side of~\eqref{eq:block} satisfies}
\begin{align*}
    &  \bar{\lambda}^2 I_n - A^{\top}(\theta) A(\theta) - A^{\top}(\theta) \frac{d A(\theta)}{d \theta} z_k \left( \bar{\lambda}^2\lambda^{2(k+1-k_0)} \beta^2 - (z_k^{\top}\frac{d^{\top} A(\theta)}{d \theta}\frac{d A(\theta)}{d \theta}z_k +  \lambda^{2(k+1-k_0)} \beta^2) \right)^{-1} z_k^{\top} \frac{d^{\top} A(\theta)}{d \theta}A(\theta)\\
     & \succeq \bar{\lambda}^2 I_n - A^{\top}(\theta) A(\theta) - A^{\top}(\theta) \frac{d A(\theta)}{d \theta} z_k \left( \lambda^{2(k-k_0)} \mu^2 \frac{ \lambda^2}{\bar{\lambda}^2 - \lambda^2} \right)^{-1} z_k^{\top} \frac{d^{\top} A(\theta)}{d \theta}A(\theta) \\
     & \succeq \left( \lambda^{2(k-k_0)} \mu^2 \frac{ \lambda^2}{\bar{\lambda}^2 - \lambda^2} \right)^{-1} \left(\left( \lambda^{2(k-k_0)} \mu^2 \frac{ \lambda^2}{\bar{\lambda}^2 - \lambda^2} \right)\left(\bar{\lambda}^2 I_n - A^{\top}(\theta) A(\theta)\right) - A^{\top}(\theta) \frac{d A(\theta)}{d \theta} z_k  z_k^{\top} \frac{d^{\top} A(\theta)}{d \theta}A(\theta)\right) \\
     & \succeq \left( \lambda^{2(k-k_0)} \mu^2 \frac{ \lambda^2}{\bar{\lambda}^2 - \lambda^2} \right)^{-1} \left(\left( \lambda^{2(k-k_0)} \mu^2 \frac{ \lambda^2}{\bar{\lambda}^2 - \lambda^2} \right)\left(\bar{\lambda}^2 I_n - \lambda^2 I_n\right) - \lambda^{2(k+1-k_0)}\mu^2 I_n\right) = 0.
\end{align*}
This establishes the positive semi-definiteness by the Schur Complement Lemma.
Finally, applying Theorem~\ref{thm:con_privacy} with $\alpha =  \sqrt{n}\max(\bar{\theta} \beta, 1)$ and $\lambda_k = \bar{\lambda}^{k-k_0}$ concludes the proof. 
\qed
 \end{proof}

\section{A Numerical Example}
\label{sec:simu}
\begin{figure}
    \centering
 \includegraphics[width=\textwidth]{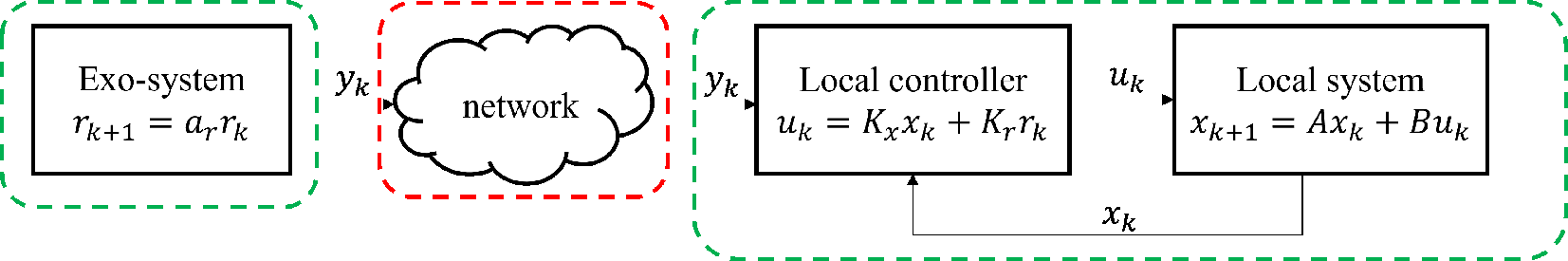}
    \caption{Mechanism diagram: The exo-system (fusion center) sends its noisy reference signals to the local system. Information in the green dashed box is not eavesdropped, while information in the red dashed box is vulnerable.}
    \label{fig:diagram}
\end{figure}
In this section, we aim to protect the privacy of the system parameter for the exo-system in an output regulation problem. The system model, including the tracking output, is given by:
\begin{subequations}\label{eq:linear}
    \begin{align}
        x_{k+1} &= A x_k + B u_k, \label{linear_1}\\
        z_k &= C x_k + D u_k, \label{linear_2}
    \end{align}
\end{subequations}
where $x \in \mathbb{R}^{n}$, $u \in \mathbb{R}^{p}$, and $z \in \mathbb{R}^m$ denote the state, control input, and tracking output, respectively. The matrices $A$, $B$, $C$, and $D$ are assumed to have appropriate dimensions. 

{The reference signal to be tracked is generated by the following exo-system:
\begin{align*}
    r_{k+1} &= A_r(\omega) r_k,\\
    y_k &= C_r r_k + v_k,
\end{align*}
where $r_k \in \mathbb{R}^{n_r}$, $y_k \in \bR^m$, and $v_k \in \bR^m$ denote the reference signal, the output, and noise generated from an independent Laplacian distribution as given in~\eqref{eq:noise} respectively, while $\omega \in \bR$ is the system parameter.
The control objective of this output regulation problem is 
\begin{align*}
\lim_{k \rightarrow \infty} e_k = 0, \quad  e_k := z_k - y_{k}. 
\end{align*}
After receiving the signal $y_k$, the local controller applies the control law
\begin{align*}
    u_k = K_x x_k + K_r y_k
\end{align*}
to regulate the system, where $K_x \in \mathbb{R}^{p \times n}$ and $K_r \in \mathbb{R}^{p \times m}$ need to satisfy the conditions in \cite[Assumption 2.4]{liu2024design}. It is shown in \cite{liu2024design} that such a pair of $K_x$ and $K_r$ can solve the output regulation problem if $v_k = 0$.

The exo-system, serving as a central coordinator, has the ability to modify amplitude or phase through different initial states. With the advancements in networking, the exo-system can transmit its output $y_k$ to the system ~\eqref{eq:linear} via a communication network, which  raises privacy concerns. The detailed system diagram with a mechanism explanation is shown in Figure~\ref{fig:diagram}. We aim to protect the privacy of system parameter $\omega$, which is considered a commercial secret and, therefore, privacy-sensitive. Relevant applications include power supply systems in semiconductor manufacturing \cite{ochs2006advanced} and current frequency regulation in electric vehicles \cite{uddin2016effects}.

{\color{blue}
The approach proposed in Section 4 provides a straightforward privacy protection strategy for $\omega$. Furthermore, this strategy can be applied to safeguard the proportional range of  $\omega$, which is particularly valuable when proportional variations reveal operational details. While most existing studies focus on the privacy protection of states or control inputs, this work addresses the protection of dynamical system parameters \cite{nozari2017differentially, wang2023,le2013differentially, zhang2019, liu2024design, wang2023differential}, which limits the availability of comparable studies. The results in \cite{Yu2020} are possibly be used to protect privacy of system parameters. However, \cite{Yu2020} focuses on privacy protection of input in the Euclidean and the results in \cite{Yu2020} cannot be directly applied to protect proportional range of $\omega$.}

Assume $A = 1$, $B =1$, $K_x =-0.3$, $C = 1$, $D= 0$, and
\begin{align*}
    A_r(\omega) &= \begin{bmatrix}
\cos (\omega) & \sin(\omega) \\
-\sin (\omega) & \cos (\omega)
\end{bmatrix}\\
    C_r &= \begin{bmatrix}
        1 & 0
    \end{bmatrix},
\end{align*}
where $\omega = \pi/20$. 
From $A_r(\omega)$ and $C_r$, one notices that $y_{k}$ is a sinusoidal wave signal with amplitude $|r_{k_0}|_2$.

Solving the equations in \cite[Assumption 2.4]{liu2024design} gives $X = [\begin{matrix}
    1 & 0
\end{matrix}]$, $U = [\begin{matrix}
    -0.0489 & 0.3090
\end{matrix}]$ and $K_r = [\begin{matrix}
    0.1511 & 0.3090
\end{matrix}]$. With these parameters, the local controller is able to track the reference signal once no noise is added. To protect the privacy of $\omega$, we design the noise $v_k$ according to Theorem~\ref{thm:sys_privacy}. It can be verified that $\|A_r(\omega)\|_2 \leq 1$ and $\|dA_r(\omega)/d\omega\|_2 \leq 1$. Additionally, we assume the initial state satisfies $\|r_{k_0}\|_2 < 300$ and $\bar{\theta} = 1$. Thus, the conditions of Theorem~\ref{thm:sys_privacy} are met. The goal of the reference system is to protect the system's information while regulating the system outputs. 

{\color{blue} We consider two differential privacy requirements with $\epsilon_k = 100 \sum_{i = k_0}^{k} 1.1^{i - k_0}$ and $\tilde{\epsilon}_k = 500 \sum_{i = k_0}^{k} 1.1^{i - k_0}$. Both of them allow for exponentially increasing privacy leakage rate. Furthermore, the adjacency parameter $\zeta$ in the adjacency set~\eqref{eq:dis_x} is defined as $\zeta = 1$. The diversity $b_k$ can be designed according to~\eqref{con_sysn} in Theorem~\ref{thm:sys_privacy} as
\begin{align*}
    b_k = \frac{\bar{\lambda}^{k-k_0} \sqrt{2} \zeta \beta}{\epsilon_k - \epsilon_{k-1}} = 22.21, \  \tilde{b}_k = \frac{\bar{\lambda}^{k-k_0} \sqrt{2} \zeta \beta}{\tilde{\epsilon}_k -\tilde{\epsilon}_{k-1}} = 4.442
\end{align*}
where $\bar{\lambda} = 1.1$ and $\beta = \frac{\bar{\lambda}  \mu}{\bar{\lambda}^2 - 1} = 1571$.} 

\begin{figure}
    \centering
    \includegraphics[width=\textwidth]{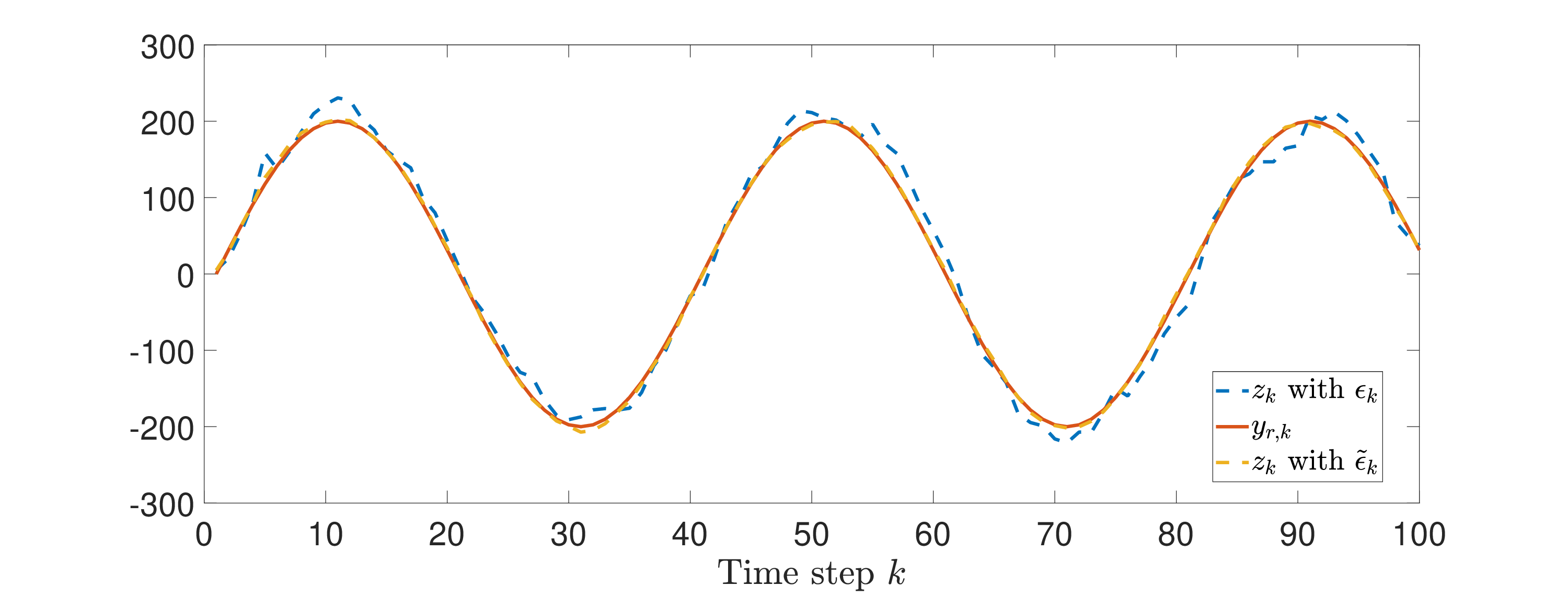}
    \caption{{\color{blue}Output regulation with system privacy protection.}}
    \label{fig:reference}
\end{figure}

\begin{figure}
    \centering
    \includegraphics[width=\textwidth]{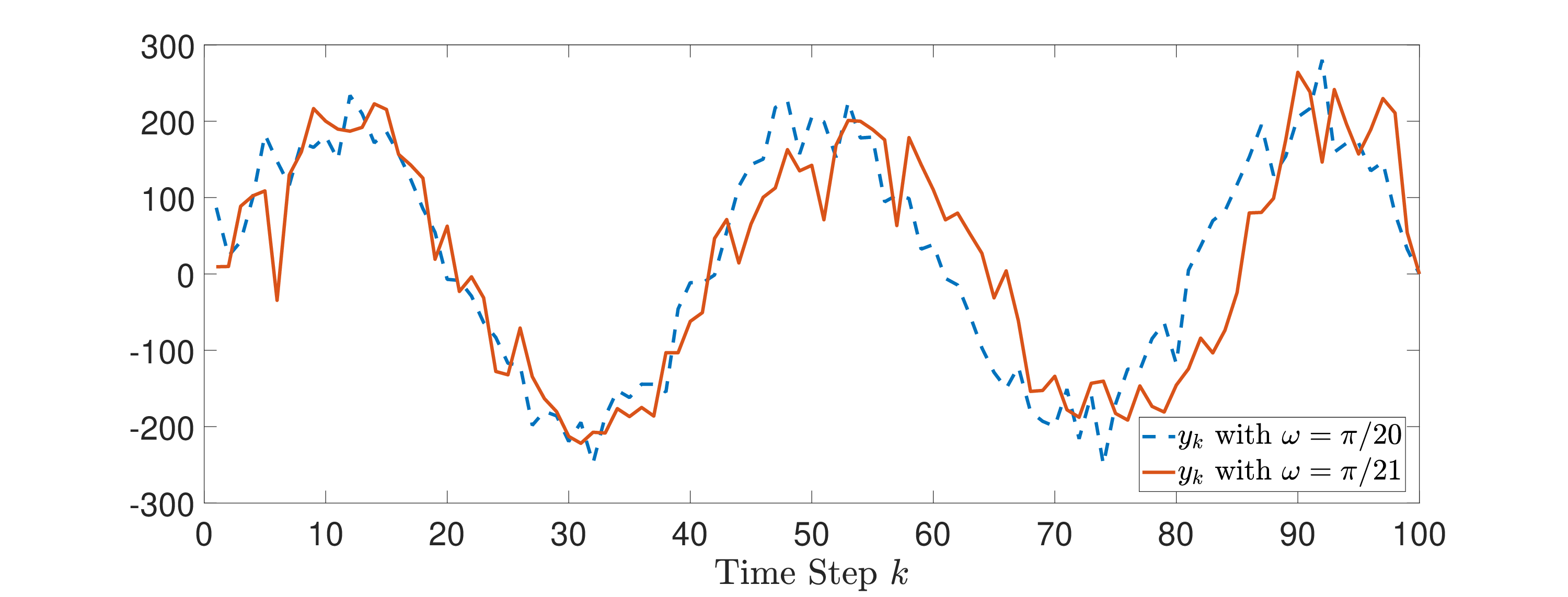}
    \caption{{\color{blue} Output Trajectories of Exo-system with Different $\omega$s}}
    \label{fig:privacy}
\end{figure}

{\color{blue}Figure~\ref{fig:reference} illustrates the output regulation process under the application of Laplacian noise. It can be observed that the system state does not precisely track the reference output during the regulation process due to the presence of noise. Since $ \tilde{\epsilon}_k < \epsilon_k$ for all $ k \in \bZ$, the privacy protection strategy using $\tilde{\epsilon}_k$ provides a higher level of privacy but results in a worse tracking performance compared to that with $\epsilon_k$. Therefore, the performance deteriorates as the privacy increases, demonstrating the trade-off between privacy and control performance}

{\color{blue}To evaluate privacy performance, Figure 3 shows output trajectories $y_k$ for two different values of $\omega$. This figure demonstrate that the outputs remain statistically similar across different $\omega$ values, indicating that privacy is effectively preserved through the proposed Laplacian mechanism.}

\section{Conclusions}
\label{sec:con}
In this paper, we have studied privacy-protection of the initial state for nonlinear closed discrete-time systems. To utilize inherent geometric structure of private information, we have introduced the concept of an initial state adjacency set based on a Riemannian distance. In accordance with this, we have proposed a differential privacy condition framed in terms of incremental output boundedness, which is verified through contraction analysis with respect to a Riemannian metric. This condition enables the design of time-varying Laplacian noise to guarantee a specified differential privacy level. Furthermore, we have demonstrated that the proposed framework can be applied to protect system parameters in addition to the initial state. Future work includes extending the proposed privacy framework to more complex privacy scenarios, involving control design.



\section*{acknowledgements}
The work of Yu Kawano was supported in part by JSPS KAKENHI Grant Number JP22KK0155. The work of Liu and Cao was supported in part by the Netherlands Organization for Scientific Research (NWO-Vici-19902).

\section*{conflict of interest}
The authors declare no conflicts of interest.

\printendnotes


\appendix

\bibliography{sample}

\end{document}